\newcommand{\myurl}[1]{{\small\url{#1}}}
\definecolor{two}{rgb}{0,0,0} 
\definecolor{lightblue}{rgb}{.1,.4,0.8}
\definecolor{myred}{rgb}{0,0,0} 
\newcommand{\mar}[1]{}
\newcommand{\Oh}{\mathrm{O}}
\renewcommand{\sout}[1]{}
\newtheorem{theorem}{Theorem}[section]
\newtheorem{lemma}[theorem]{Lemma}
\newtheorem{corollary}[theorem]{Corollary}
\newcommand{\defi}[1]{\textbf{#1}}
\newcommand{\BbbQ}{\mathbb{Q}}
\newcommand{\QQQ}{\BbbQ_{\scriptscriptstyle\,\geq}}
\newcommand{\Acal}{\mathcal{A}}
\newcommand{\Bcal}{\mathcal{B}}
\newcommand{\Hcal}{\mathcal{H}}
\newcommand{\Lcal}{\mathcal{L}}
\newcommand{\Pcal}{\mathcal{P}}
\newcommand{\Mcal}{\mathcal{M}}
\newcommand{\Ncal}{\mathcal{N}}
\newcommand{\barra}{\overline}
\newcommand{\Scal}{\mathcal{S}}
\newcommand{\Ucal}{\mathcal{U}}
\newcommand{\Zcal}{\mathcal{Z}}
\newcommand{\Tstar}{O}
\newcommand{\tstar}{o}
\newcommand{\capt}%
           [2]%
           {\centering\parbox{#1}{\caption[dummy]{\small #2}}}
\newenvironment{myitemize}%
   {%
    \begin{list}%
      {}%
      {%
       \setlength{\partopsep}{1.0ex}%
       \setlength{\topsep}{1.0ex}%
       \setlength{\parsep}{0ex}%
       \setlength{\itemsep}{0.5ex}%
       \setlength{\leftmargin}{10,5ex}
       \setlength{\rightmargin}{0ex}%
       \setlength{\itemindent}{0ex}%
       \setlength{\labelsep}{1ex}%
       \setlength{\labelwidth}{5ex}%
      }%
   }{%
     \end{list}%
    }
\def\qed{~\vrule width.15cm height.2cm depth0cm\smallskip}
\newenvironment{proof}{\noindent{\bf Proof.}}{}
\def\enddiscard{}
\long\def\discard#1\enddiscard{}
\newcommand{\mytextstyle}{}
\def\eps{\varepsilon}
\def\yeps{y^{\eps}}
\newcommand{\opt}{\ensuremath{\mathrm{opt}}}
\newcommand{\pcst}{\textsc{PCST}} 
\newcommand{\jmp}{\textsc{JMP}}
\begin{document}

\title{A Note on Johnson, Minkoff and Phillips' Algorithm \\ 
            for the Prize-Collecting Steiner Tree Problem}

\author{{ 
{\small Paulo Feofiloff
\thanks{Departamento de Ci\^encia da Computa\c c\~ao, Instituto de
  Matem\'atica e Estat\'{\i}stica, Universidade de S\~ao Paulo, Rua do
  Mat\~ao 1010, 05508-090 S\~ao Paulo/SP, Brazil. E-mail:
  \texttt{\{pf,cris,cef,coelho\}@ime.usp.br}. 
  Research supported in part by PRONEX/CNPq 664107/1997-4 (Brazil).}}}
\and{
{\small Cristina G.\ Fernandes \footnotemark[1]
}}
\and{
{\small Carlos E.\ Ferreira \footnotemark[1]
}}
\and{
{\small Jos\'e Coelho de Pina \footnotemark[1]}}
}
\date{revised October 2009~\footnotemark[2]
\addtocounter{footnote}{1}
\footnotetext{This paper was originally published as
\myurl{http://www.ime.usp.br/~cris/publ/jmp-analysis.ps.gz}
in 2006.
The present version makes explicit a stronger statement, 
implicit in the original version: that the addressed 
implementation is a Lagrangean preserving $2$-approximation. 
It also introduces some cosmetic changes in notation and
corrects a technical error in the proof of one of the invariants.}
}

\maketitle

\begin{abstract}
\noindent
The primal-dual scheme has been used to provide approximation
algorithms for many problems. 
Goemans and Williamson gave a $(2-\frac{1}{n-1})$-approximation 
for the Prize-Collecting Steiner Tree Problem 
that runs in $\Oh(n^3 \log n)$ time.
Johnson, Minkoff and Phillips proposed a faster implementation of
Goemans and Williamson's algorithm. 
We give a proof that the
approximation ratio of this implementation is exactly~$2$.
\end{abstract}

\section{Introduction}

Consider a graph $G=(V,E)$, a function $c$ from $E$ into 
the set $\QQQ$ of non-negative rationals 
and a function $\pi$ from~$V$ into~$\QQQ$.
The \defi{Prize-Collecting Steiner Tree Problem} (\pcst) 
asks for a tree $T$ in $G$ such that
$\sum_{e \in E_T}c_e + \sum_{v \in V \setminus V_T}\pi_v$ is minimum.
(We denote by $V_T$ and~$E_T$, respectively, 
the vertex and edge sets of a graph~$T$.) 
\begin{hide}
  ((We are assuming that, by definition, every tree has at least one vertex;
    this is relevant in case $c=\infty$ and $\pi=0$, for example.
    Come to think of it,
    I am not sure this is really important.
    Maybe I need it in the proof of Lemma~\ref{lemma3}?))
\end{hide}
The rooted variant of the problem requires $T$ to contain a given root
vertex.

Goemans and Williamson~\cite{GoemansW95,Hochbaum97} used a primal-dual
scheme to derive a $(2-\frac{1}{n-1})$-approximation 
for the rooted variant of~\pcst,
where $n:=|V|$. 
By trying all possible choices for the root, they
obtained a $(2-\frac{1}{n-1})$-approximation for the unrooted~\pcst.
The resulting algorithm runs in time $\Oh(n^3 \log n)$.  
Johnson, Minkoff and Phillips~\cite{JohnsonMP00} 
proposed a modification of the
algorithm that runs the primal-dual scheme only once,
resulting in a running-time of $\Oh(n^2 \log n)$. 
They claimed their algorithm~--- which we refer to as \jmp~---
achieves an approximation ratio of~$2-\frac{1}{n-1}$.
Unfortunately, their claim does not hold.

This note does two things.
First, it proves that the \jmp\ algorithm is a $2$-approximation
(the proof involves some non-trivial technical details). 
Second, 
it shows an example where the approximation ratio achieved by the \jmp\ algorithm 
is exactly~$2$, 
thereby contradicting 
the claim by Johnson, Minkoff and Phillips. 

\section{Notation and preliminaries}


For any subset $F$ of~$E$, let $c(F) := \sum_{e \in F}c_e$.
For any subset $X$ of~$V$, let $\pi(X) := \sum_{v \in X}\pi_v$ 
and let $\barra{X}:=V \setminus X$. 
If $T$ is a subgraph of~$G$, 
we shall abuse notation and write $\pi(T)$ and $\pi(\barra{T})$
to mean $\pi(V_T)$ and $\pi(\barra{V_T})$ respectively.
Similarly, we shall write $c(T)$ 
to mean~$c(E_T)$.
Hence, the goal of $\pcst(G,c,\pi)$ is to find a tree $T$ in $G$ 
such that $c(T)+\pi(\barra{T})$ is minimum.

A collection $\Lcal$ of 
\textcolor{myred}{nonnull}
subsets of $V$ is \defi{laminar} if, 
for any two elements $L_1$ and $L_2$ of~$\Lcal$, 
either $L_1 \cap L_2 = \emptyset$ 
or $L_1 \subseteq L_2$ 
or $L_1 \supseteq L_2$. 
\begin{hide}
 ((We assume that no laminar collection contains the null set.))
\end{hide}
For any subset $X$ of~$V$, let 
\[
\Lcal[X] := \{L\in \Lcal : L\subseteq X\}
\quad\mbox{and}\quad
\Lcal_X  := \{L\in \Lcal : L\supseteq X\}\:.
\]
\begin{hide}
 ((If $X\in \Lcal$ then $X$ is in $\Lcal[X]$ as well as in $\Lcal_X$.))
\end{hide}
For every $L$ in $\Lcal$ that is not in
$\Lcal[X] \cup \Lcal[\barra{X}] \cup \Lcal_X$,
the sets $L\cap X$, $L\setminus X$ and $X\setminus L$ are all nonempty.
For any subgraph $T$ of~$G$,
we shall abuse notation and write 
$\Lcal[T]$, $\Lcal[\barra{T}]$, and $\Lcal_{T}$
in place of 
$\Lcal[V_T]$, $\Lcal[\barra{V_T}]$, and $\Lcal_{V_T}$
respectively.

The union of all sets in~$\Lcal$ shall be denoted by~$\bigcup\Lcal$.
The set of all maximal elements of~$\Lcal$ shall be denoted by~$\Lcal^*$.
If $\Lcal$ is laminar,
the elements of $\Lcal^*$ are pairwise disjoint.
If, in addition, $\bigcup \Lcal = V$ then  $\Lcal^*$ is a partition of~$V$.

For any laminar collection $\Lcal$ of subsets of $V$ 
and any edge $e$ of~$G$, 
let $\Lcal(e) := \{L \in \Lcal: e \in \delta_G L\}$, 
where $\delta_G L$ stands for the set of edges of $G$ 
with one end in $L$ and the other in~$\barra{L}$. 

Let $y$ be a function from $\Lcal$ into~$\QQQ$.
For any subcollection $\Lcal'$ of~$\Lcal$, 
let
$
y(\Lcal'):=\sum^{}_{L \in \Lcal'}y^{}_L
$.
We say that $y$ \defi{respects} $c$~if 
\begin{equation}
y(\Lcal(e)) \ \leq \ c_e 
\quad \mbox{for each $e$ in $E$\:.} \label{yrespeitac}
\end{equation} 
We say an edge $e$ is \defi{tight for~$y$} if equality holds in~(\ref{yrespeitac}). 
We say $y$ \defi{respects} $\pi$~if
\begin{equation}
\mytextstyle
y(\Lcal[X]) \ \leq \ \pi(X) 
\quad  \mbox{for each $X$ in $\Lcal$\:.} \label{yrespeitapi1}
\end{equation} 
\begin{hide}
 ((I could have required this for all $X\subseteq V$, 
 but I guess this would be redundant.)) 
\end{hide}
We shall say that 
$y$ \defi{saturates} an element $X$ of~$\Lcal$
if equality holds in~(\ref{yrespeitapi1}). 
The following lemma summarizes the effect of the two ``respects''
constraints on~$y$:

\begin{hide}
 ((The following lemma takes the place of weak duality.
   The $y(\Lcal \setminus \Lcal_{T})$ instead of $y(\Lcal)$
   is an echo of the root vertex in the linear program.))
\end{hide}

\begin{lemma} 
Let $\Lcal$ be a laminar collection of subsets of $V$ 
and $y$ a function from $\Lcal$ into~$\QQQ$.
If $y$ respects $c$ and~$\pi$ then
\[
y(\Lcal \setminus \Lcal_{T}) 
\ \leq \ 
c(T) + \pi(\barra{T})
\]
for any connected subgraph $T$ of~$G$.
\end{lemma}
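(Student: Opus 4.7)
My plan is to split $\Lcal \setminus \Lcal_T$ into two disjoint subcollections, bound one by $c(T)$ using the constraint that $y$ respects $c$, and bound the other by $\pi(\barra T)$ using the constraint that $y$ respects $\pi$, and then add. Concretely, I would write
\[
\Lcal \setminus \Lcal_T \ = \ \Lcal[\barra T] \ \sqcup \ \Ccal, \qquad \text{where } \Ccal := \Lcal \setminus (\Lcal[\barra T] \cup \Lcal_T),
\]
noting that $\Lcal[\barra T]\cap \Lcal_T=\emptyset$ since $V_T\neq\emptyset$. The members of $\Ccal$ are exactly the $L\in\Lcal$ that meet both $V_T$ and $\barra{V_T}$ without containing $V_T$.

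To handle $\Ccal$, I would sum~(\ref{yrespeitac}) over $e\in E_T$ and swap the order of summation:
\[
c(T) \ = \ \sum_{e\in E_T} c_e \ \geq \ \sum_{e\in E_T} y(\Lcal(e)) \ = \ \sum_{L\in\Lcal} y_L\,|\delta_G L \cap E_T|.
\]
The key observation is that $|\delta_G L \cap E_T|=0$ whenever $L\cap V_T=\emptyset$ or $V_T\subseteq L$, while for any $L\in\Ccal$ the connectivity of $T$ guarantees at least one edge of $E_T$ crossing from $L\cap V_T$ to $V_T\setminus L$, so $|\delta_G L \cap E_T|\geq 1$. Hence $c(T) \geq y(\Ccal)$.

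To handle $\Lcal[\barra T]$, I would take the set $(\Lcal[\barra T])^*$ of its maximal elements; by laminarity these are pairwise disjoint subsets of $\barra{V_T}$, and every member of $\Lcal[\barra T]$ sits inside exactly one of them. Each maximal element $M$ belongs to $\Lcal$, so~(\ref{yrespeitapi1}) yields $y(\Lcal[M])\leq \pi(M)$, and summing over $M\in(\Lcal[\barra T])^*$ gives
\[
y(\Lcal[\barra T]) \ = \ \sum_{M\in(\Lcal[\barra T])^*} y(\Lcal[M]) \ \leq \ \sum_{M} \pi(M) \ \leq \ \pi(\barra T).
\]
Adding this to $c(T)\geq y(\Ccal)$ and using the disjoint decomposition of $\Lcal\setminus\Lcal_T$ finishes the proof.

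The only genuinely delicate step is the second one: one must be careful with the laminar bookkeeping to be sure $\Lcal[\barra T]$ really is the disjoint union of the $\Lcal[M]$'s, and that the maximal elements belong to~$\Lcal$ so that~(\ref{yrespeitapi1}) applies. Everything else is a direct manipulation of the two ``respects'' inequalities together with the connectivity of~$T$.
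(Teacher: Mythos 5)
Your proof is correct and follows essentially the same decomposition as the paper: the paper writes $\Lcal = \Mcal \cup \Ncal \cup \Lcal_T$ with $\Mcal := \{L : \delta_T L \neq \emptyset\}$ and $\Ncal := \Lcal[\barra T]$, which coincides with your $\Ccal \sqcup \Lcal[\barra T] \sqcup \Lcal_T$ since for a connected $T$ one has $\delta_T L \neq \emptyset$ iff $L\cap V_T\neq\emptyset$ and $V_T\not\subseteq L$; the bounds for each part are obtained exactly as you do. (One small slip in your description, which your argument does not actually rely on: a member of $\Ccal$ need not meet $\barra{V_T}$ — a proper nonempty subset $L\subsetneq V_T$ also lies in $\Ccal$ — but your claim $|\delta_T L|\geq 1$ for $L\in\Ccal$ still holds in that case.)
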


\begin{proof}
For $\Mcal := \{L \in \Lcal: \delta_T L \neq \emptyset\}$,
we have
$\mytextstyle
y(\Mcal) 
  \ \leq \ \sum^{}_{L \in \Mcal}|\delta_T L|y^{}_L  
  \ = \ \sum^{}_{e \in E_T} y(\Lcal(e)) 
  \ \leq \ \sum^{}_{e \in E_T} c_e 
  \ = \ c(T)
$.
For $\Ncal := \Lcal[\barra{T}]$, 
we have
$\mytextstyle
 y(\Ncal) 
 \ =    \ \sum^{}_{L\in \Ncal^*} y(\Lcal[L])
 \ \leq \ \sum^{}_{L\in \Ncal^*} \pi(L)
 \ \leq \ \pi(\barra{T})
$.
The lemma follows from the two inequalities 
since $\Lcal  = \Mcal \cup \Ncal \cup \Lcal_{T}$.~\qed
\end{proof}

\begin{hide}
 ((This lemma will not be used with the tree produced by the algorithm
   playing the role of $T$.
   Instead, it will be used with an optimum tree playing the role of $T$.))
\end{hide}

Let $\opt(\pcst(G,c,\pi))$ denote the minimum value of the sum 
$c(T)+\pi(\barra{T})$ when $T$ is a tree in~$G$.
Then the following corollary establishes the relevant lower
bound for~$\opt(\pcst(G,c,\pi))$:

\begin{corollary}\label{n-coro}
Let $\Lcal$ be a laminar collection of subsets of $V$ 
and $y$ a function from $\Lcal$ into~$\QQQ$.
If $y$ respects $c$ and~$\pi$ then
$y(\Lcal \setminus \Lcal_{\Tstar}) 
\ \leq \ 
\opt(\pcst(G,c,\pi))$
for any optimal solution $\Tstar$ of $\pcst(G,c,\pi)$.~\qed
\end{corollary}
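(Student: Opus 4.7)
The plan is very short: this is essentially an immediate consequence of the preceding lemma, so I would not do new work but rather instantiate that lemma at an optimal tree.

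First I would recall that, by definition of $\opt(\pcst(G,c,\pi))$, any optimal solution $\Tstar$ is a tree in $G$ satisfying
\[
c(\Tstar) + \pi(\barra{\Tstar}) \ = \ \opt(\pcst(G,c,\pi))\:.
\]
Since a tree is a connected subgraph of $G$, the hypotheses of the previous lemma are met with $T := \Tstar$. Applying the lemma then yields
\[
y(\Lcal \setminus \Lcal_{\Tstar}) \ \leq \ c(\Tstar) + \pi(\barra{\Tstar}) \ = \ \opt(\pcst(G,c,\pi))\:,
\]
which is exactly the statement of the corollary.

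There is no real obstacle: the only thing to check is that an optimal tree is indeed a connected subgraph (so that the lemma applies), which is immediate from the convention already fixed in the paper that every tree has at least one vertex and hence is connected. No further calculation is needed.
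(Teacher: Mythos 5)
Your proposal is exactly the argument the paper intends: the corollary is stated with a terminal~\qed\ precisely because it is the immediate instantiation of the preceding lemma at an optimal tree $\Tstar$, which is a connected subgraph of~$G$. Nothing further is needed.
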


Before we state the algorithm,
a few more definitions are needed.
Let $\Lcal$ be a laminar collection of subsets of $V$ 
such that $\bigcup \Lcal = V$.
We say that an edge is \defi{internal to}~$\Lcal^*$ 
if both of its ends are in the same element of~$\Lcal^*$. 
All other edges are \defi{external to}~$\Lcal^*$. 
For any external edge, there are two
elements of $\Lcal^*$ containing its ends. 
We call these two elements the \defi{extremes} of the edge in~$\Lcal^*$.

Given a forest $F$ in $G$ and a subset $L$ of~$V$,
we say that $F$ is \defi{$L$-connected}
if \textcolor{myred}{$V_F\cap L=\emptyset$ or}
the induced subgraph $F[V_F\cap L]$ is connected.
In other words, 
$F$ is $L$-connected if the following property holds:
for any two vertices $x$ and $y$ of $F$ in~$L$,
there exists a path from $x$ to $y$ in $F$ 
and that path never leaves~$L$.
If $F$ spans~$G$
(as is the case during the first phase of the algorithm below),
the condition ``$F[V_F\cap L]$ is connected'' 
can, of course, be replaced by ``$F[L]$ is connected''.

For any collection $\Lcal$ of subsets of~$V$,
we shall say that $F$ is \defi{$\Lcal$-connected}
if $F$ is $L$-connected for each $L$ in~$\Lcal$.

For any collection $\Scal$ of subsets of~$V$, 
we say a tree $T$ 
\defi{has no bridge in $\Scal$} 
if $|\delta_T S| \neq 1$ 
(whence $\delta_T S = \emptyset$ or $|\delta_T S| \geq 2$) 
for all $S$ in~$\Scal$.
We say that a tree $T$ in $G$ \defi{is wrapped in} $\Scal$
if $V_T \subseteq S$ for some $S$ in $\Scal$.
\begin{hide}
 ((i.e., $\Scal_{T} = \emptyset$.))
\end{hide}

\section{Johnson, Minkoff and Phillips' algorithm}

The \jmp\ algorithm is a $2$-approximation for the~\pcst. 
It receives $G$, $c$,~$\pi$ and returns a tree $T$ in~$G$ 
such that $c(T)+\textcolor{two}{2}\,\pi(\barra{T}) \leq 2\,\opt(\pcst(G,c,\pi))$.
\textcolor{myred}{
  For our purposes, it would be enough to have
  $c(T)+\pi(\barra{T})$ on the left side of the inequality.
  The factor \textcolor{two}{$2$} multiplying $\pi$ is a bonus,
  and, because of it, the \jmp\ algorithm is said to be 
  a {\it Lagrangean preserving} 2-approximation~\cite{ArcherBHK09}.%
}

The algorithm has two phases, 
the second one operating on the output of the first.

\medskip
\textbf{Phase I:} 
Each iteration in phase~I starts with 
a spanning forest $F$ in~$G$, 
a laminar collection $\Lcal$ of subsets of $V$ such that $\bigcup\Lcal=V$, 
a subcollection $\Scal$ of~$\Lcal$, 
and a function $y$ from $\Lcal$ into $\QQQ$ 
such that the following invariants hold:
\begin{hide}
 ((Invariant (i0) used to be this:
   ``all edges of $F$ are internal to~$\Lcal^*$.''
   But this is used only to prove that $F$ is a forest, 
   which is a anonymous invariant.
   Hence, (i0) must be treated as a anonymous invariant as well.))
\end{hide}
\begin{myitemize}
\item[(i1)] 
$F$ is $\Lcal$-connected;
\item[(i2)] 
$y$ respects $c$ and~$\pi$;
\item[(i3)] 
each edge of $F$ is tight for~$y$;
\item[(i4)] 
$y$ saturates every element of~$\Scal$;

\item[(i5)] 
\sout{$\bigcup \Scal \neq V$;}
\textcolor{myred}{no element of $\Lcal^*\setminus\Scal$
 is the union of elements of $\Scal$;}
\begin{hide}
  ((OLD: no element of $\Lcal\cup\{V\}$ is the union of 
    two or more elements of~$\Scal$;))
\end{hide}

\item[(i6)] 
for any $\Lcal$-connected tree $T$ in~$G$, 
if $T$ has no bridge in~$\Scal$
and is not wrapped in~$\Scal$
then
\begin{equation}\label{n-eq3}
\mytextstyle
\sum^{}_{e\in E_T} y(\Lcal(e)) 
 + \textcolor{two}{2}\,y(\Lcal[\barra{T}])  
\ \leq \ 
2\,y(\Lcal \setminus \Lcal_{\{\tstar\}})
\end{equation}
for any vertex $\tstar$ of~$G$. 
\begin{lighthide}
 (This says essentially that,
 for any subtree tree $T$ of~$F$, 
 if $T$ has no bridge in~$\Scal$
 and is not wrapped in~$\Scal$
 then
 $c(T) + 2\,y(\Lcal[\barra{T}]) \leq  
 2\,y(\Lcal \setminus \Lcal_{\{\tstar\}})$
 for any vertex $\tstar$ in any optimum solution of the \pcst.
 For technical reasons, however, the invariant is formulated
 in more general terms.)
\end{lighthide}
\begin{hide}
 ((You cannot write this as
  $c(T) + 2\,y(\Lcal[\barra{T}]) \leq  
  2\,y(\Lcal \setminus \Lcal_{\{\tstar\}})$,
  because then you would have to add hypothesis 
  ``all edges of $T$ are tight for~$y$''
  and that would make it hard/impossible to prove~(i6)
  because you could not apply induction hypothesis to $T$
  whose edges are tight for $\yeps$.))
\end{hide}
\end{myitemize}
\begin{hide}
  ((It would be nice to give some intuition about the meaning of~(i6).
   But I don't know how to do this.))
\end{hide}
\begin{hide}
  ((For some misterious reason, we don't seem to need the following invariant:
   for each $L$ in $\Lcal$,
   if $|L|>1$ then $L=L'\cup L''$ for 
   some $L'$ and $L''$ in $\Lcal\setminus\{L\}$.))
\end{hide}
The first iteration starts with 
$F=(V,\emptyset)$, 
$\Lcal = \{\{v\}: v \in V\}$, $\Scal=\emptyset$, 
and $y=0$. 
Each iteration consists of the following:

\begin{hide}
  ((I guess you can eliminate variable $\Scal$ and,
    at the beginning of each iteration, define $\Scal$ as the set 
    of all saturated elements of $\Lcal$.
    Subcase I.1.A would be the case where at least one element of
    $\Lcal^*\setminus\Scal$ becomes saturates. 
    Do nothing in this case.
    Ooops. But then you may miss Case I.2 because they all become saturated
    at once... ))
\end{hide}

\noindent 
\begin{quote}
\textbf{Case I.1:} \
$|\Lcal^* \setminus \Scal| > 1$.\\[0.4ex] 
For $\eps$ in $\QQQ$, let $\yeps$ be the function defined as follows:
$\yeps_L  = y^{}_L + \eps$ if $L \in \Lcal^* \setminus \Scal$
and $\yeps_L = y^{}_L$ otherwise.
Let $\eps$ be the largest number in $\QQQ$ such that the function $\yeps$
respects $c$ and~$\pi$.
\begin{hide}
 ((Hence,
  either (1)~$\yeps(\Lcal[S])=\pi(L)$
  for some $L$ in $\Lcal^*\setminus\Scal$ 
  or (2)~$\yeps(\Lcal(e))=c_e$
  for some external edge $e$ at least one of whose extremes in $\Lcal^*$
  is in $\Lcal^*\setminus\Scal$.
  In both cases we may have $\eps=0$.))
\end{hide}

\begin{quote}
\pagebreak[3]
\textbf{Subcase I.1.A:} \
$\yeps$ saturates some element $L$ of $\Lcal^* \setminus \Scal$.\\[0.4ex]
Start a new iteration with $\Scal \cup \{L\}$ and $\yeps$
in the roles of $\Scal$ and $y$ respectively.
(The forest $F$ and the collection $\Lcal$ do not change.)

\smallskip
\textbf{Subcase I.1.B:} \ 
some edge $e$ external to $\Lcal^*$ 
is tight for~$\yeps$
\textcolor{myred}{
and has at least one of its extremes in $\Lcal^*\setminus\Scal$.}\\[0.4ex]
Let $L_1$ and $L_2$ be the extremes of~$e$ in~$\Lcal^*$.
Set $\yeps_{L_1 \cup L_2} := 0$ and start a new iteration with 
$F+e$, $\Lcal \cup \{L_1 \cup L_2\}$, and $\yeps$ 
in the roles of $F$, $\Lcal$, and $y$ respectively.
\begin{hide}
  (($F+e$ is still a forest because $F$ had no edges external to $\Lcal^*$.
    This  invariant is anonymous, just as ``$F$ is forest'' is anonymous.))
\end{hide}
(The collection $\Scal$ does not change.)
\end{quote}

\begin{hide}
 ((Example where the order in which subcases I.1.A and I.1.B are executed
   makes a difference:
   $G$ is circuit with vertices $1,\ldots,n$,
   $\pi_1=\pi_2=10$ and all other $\pi$ are $1$,
   $c_{12}=2+\rho$ and all ther $c$ are $2$.
   (But the algorithm is correct no matter the order in which subcases are
   executed.))
\end{hide}


\textbf{Case I.2:} \
$|\Lcal^* \setminus \Scal| = 1$. \\[0.4ex]
\begin{hide}
  ((Nihil obstat $\Lcal^*\setminus\Scal = \{V\}$
    and nihil obstat the only element of $\Lcal^*\setminus\Scal$ being saturated.))
\end{hide}
This is the end of phase~I.
Start phase~II.
\end{quote}

\noindent
\textbf{Phase II:} 
During this phase, the collections $\Lcal$ and $\Scal$
and the function $y$ remain unchanged.
Let $M$ be the only element of $\Lcal^* \setminus \Scal$.
Each iteration begins with a 
subgraph~$T$
\begin{hide}
  ((with at least one vertex))
\end{hide}
of $F$ such that 
\begin{myitemize}

\item[(i7)] 
$T$ is an $\Lcal$-connected tree;


\item[(i8)] 
\sout{$\barra{V_T}$ admits a partition into elements of $\Scal$.}
\textcolor{myred}{$M\setminus V_T$ admits a partition into elements of $\Scal$.}
\begin{hide}
 ((equivalent: $M\setminus V_T$ is the union of some elements of $\Scal$))
\end{hide}


\end{myitemize}
The first iteration begins with $T=F[M]$.
Each iteration does the following:
\begin{hide}
  ((At the beginning of phase~II,  
   the expression ``$F[M]$'' makes sense because $M\subseteq V_F$.
   Hence, $T$ is well defined.))
\end{hide}

\smallskip
\noindent 
\begin{quote}
\textbf{Case II.1:} \
$|\delta_T Z| = 1$ for some $Z$ in~$\Scal$. \\[0.4ex]
Start a new iteration with $T-Z$ in place of~$T$.
\begin{hide}
  ((Choosing a maximal $Z$ is, I guess, technically correct,
    but creates a headache\ldots \
    You must show that the following situation cannot occur:
    $S'\subset S''$, both saturated, $|\delta_T S'|=1$, $|\delta_T S''|=2$
    but $|\delta_{T-S'} S''|=1$.))
\end{hide}

\textbf{Case II.2:} \
$|\delta_T Z| \neq 1$ for each $Z$ in~$\Scal$.\\[0.4ex]
Return $T$ and stop.
\end{quote}

\section{Analysis of the algorithm}

Suppose, for the moment, that invariants~(i1) to~(i8) are correct.
At the end of phase~II,
$T$ is a tree by virtue of~(i7)\mar{(i7)}.
As $T$ is a subgraph of $F$, due to~(i3)\mar{(i3)},
\[\mytextstyle
c(T) 
= \sum_{e \in E_T} c_e 
= \sum_{e \in E_T} y(\Lcal(e))\:.
\]
On the other hand,
\textcolor{myred}{
$\Lcal^*\cap\Scal$ is a partition of $\barra{M}$
and,
by~(i8)\mar{(i8)},
there is a partition of $M\setminus V_T$ into elements of~$\Scal$.
Therefore, some subcollection $\Zcal$ of $\Scal$ 
is a partition of~$\barra{V_T}$.
}
Hence,
\[
\mytextstyle
\pi(\barra{T})
 =  \sum^{}_{S\in \Zcal} \pi(S)
 =  \sum^{}_{S\in \Zcal} y(\Lcal[S])
\leq y(\Lcal[\barra{T}])\:.
\]
Here, the second equality follows from~(i4)\mar{(i4)}.
\begin{hide}
  ((Cannot justify equality by ``$y$ respects $\pi$''
   because $\barra{V_T}$ is not necessarily in $\Lcal$.
   But equality will be forced below anyway.)
\end{hide}
Therefore,
\begin{equation}\label{n-zero}
\mytextstyle
c(T) + \textcolor{two}{2}\pi(\barra{T})  
\leq 
\sum_{e \in E_T} y(\Lcal(e)) + \textcolor{two}{2}\,y(\Lcal[\barra{T}])\:. 
\end{equation}
In order to show that~(\ref{n-eq3}) holds,
we must verify that $T$ satisfies the hypotheses of~(i6).
By~(i7)\mar{(i7)},
$T$ is $\Lcal$-connected.
\textcolor{myred}{
Due to~(i5)\mar{(i5)}, $M$ is not the union of elements of~$\Scal$.
Hence,  
by virtue~(i8)\mar{(i8)},
$T$ is not wrapped in~$\Scal$.
}
Since we are in Case~II.2,
$T$ has no bridge in~$\Scal$.
Hence, $T$ satisfies the hypotheses of~(i6)\mar{(i6)}.
Now, by~(\ref{n-eq3}) coupled with~(\ref{n-zero}),
\begin{equation}\label{ph2:one}
\mytextstyle
c(T) + \textcolor{two}{2}\,\pi(\barra{T}) 
\leq  2\,y(\Lcal \setminus \Lcal_{\{\tstar\}})
\end{equation}
for any vertex~$\tstar$.
Now, let $\tstar$ be an arbitrary vertex 
of an optimal solution $\Tstar$ of $\pcst(G,c,\pi)$.
Since $y$ respects $c$ and~$\pi$,
as stated in~(i2),\mar{(i2)}
Corollary~\ref{n-coro} implies
\[
c(T) + \textcolor{two}{2}\,\pi(\barra{T})
\ \leq \ 2\,y(\Lcal \setminus \Lcal_{\{\tstar\}})
\ \leq \ 2\,y(\Lcal \setminus \Lcal_{\Tstar})
\ \leq \ 2\,\opt(\pcst(G,c,\pi))\:.
\]
This proves the following theorem
(which is the correct version of Theorem~3.2 
by Johnson, Minkoff and Phillips~\cite{JohnsonMP00}):

\begin{theorem} \label{n-maint}
The \jmp\ algorithm is a Lagrangean preserving $2$-approximation for the~\pcst.
\end{theorem}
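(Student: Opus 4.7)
My plan is to combine the derivation already laid out in the paragraphs immediately preceding the theorem with a separate induction establishing that invariants (i1)--(i8) are maintained throughout the algorithm. The excerpt shows that once (i1)--(i8) hold at the end of Phase~II, Corollary~\ref{n-coro} together with (\ref{ph2:one}) immediately yields $c(T)+2\pi(\barra{T})\leq 2\,\opt(\pcst(G,c,\pi))$, which is exactly the Lagrangean preserving $2$-approximation guarantee. So the entire theorem reduces to verifying that the algorithm preserves its own invariants, and I would organise this verification into two pieces: initialisation and preservation in Phase~I, and initialisation and preservation in Phase~II.

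For the easy invariants, the initial Phase~I configuration trivially satisfies (i1)--(i5), and (i6) holds vacuously since both sides of (\ref{n-eq3}) are zero. In Subcase~I.1.A only $\Scal$ and $y$ change: (i2) follows from the maximal choice of $\eps$, (i4) from the subcase's trigger condition, and (i1), (i3), (i5) are unaffected. In Subcase~I.1.B the augmentation $F\gets F+e$ remains a forest because $e$ was external to $\Lcal^*$; $\Lcal$-connectedness extends to the merged set $L_1\cup L_2$ because $F[L_1]$ and $F[L_2]$ were already connected and are now joined by~$e$; tightness of $e$ for $\yeps$ gives (i3); and (i5) follows because at least one of $L_1,L_2$ lies in $\Lcal^*\setminus\Scal$, so the merged set is not a union of elements of~$\Scal$. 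For Phase~II, the initial tree $T=F[M]$ is connected because $F$ is $M$-connected, inherits $\Lcal$-connectedness from $F$, and satisfies (i8) since $M\setminus V_T=\emptyset$. In Case~II.1, amputating the portion of $T$ sitting in $Z\in\Scal$ (where $|\delta_T Z|=1$) leaves an $\Lcal$-connected tree and extends the partition of $M\setminus V_T$ into $\Scal$-blocks by appending~$Z$.

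The main obstacle is invariant~(i6), which is the only one that actually carries the approximation factor~$2$. I would establish its preservation in Case~I.1 by tracking how each side of (\ref{n-eq3}) grows per unit of $\eps$. For any tree $T$ satisfying the hypotheses of (i6), the left-hand side grows at rate
\[
\bigl|\{(L,e): L\in\Lcal^*\!\setminus\!\Scal,\ e\in E_T\cap\delta_G L\}\bigr| \;+\; 2\,\bigl|\{L\in\Lcal^*\!\setminus\!\Scal : L\subseteq\barra{V_T}\}\bigr|,
\]
while the right-hand side grows at rate $2\,|\{L\in\Lcal^*\!\setminus\!\Scal : \tstar\notin L\}|$. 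The required comparison reduces to a counting inequality on the tree $T'$ obtained from $T$ by contracting each element of $\Lcal^*$ that meets $V_T$. The handshake identity $\sum_{L\in V(T')}\deg_{T'}(L)=2(|V(T')|-1)$, combined with the no-bridge hypothesis (which forces $\deg_{T'}(L)\geq 2$ for every $L\in\Lcal^*\cap\Scal$ meeting $V_T$, since such an $L$ cannot equal $V_T$ because $T$ is not wrapped in~$\Scal$) and the observation that $\tstar$ lies in at most one element of $\Lcal^*\setminus\Scal$, is enough for the inequality to collapse with exactly the right slack.

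Once the $\eps$-growth of (i6) is in hand, its preservation under the structural updates is easy: Subcase~I.1.A only restricts the admissible class of trees $T$ (by adding constraints to the no-bridge and not-wrapped hypotheses), while Subcase~I.1.B augments $\Lcal$ by $L_1\cup L_2$ whose $y$-value is initialised to $0$, so the numerical values on both sides of (\ref{n-eq3}) are unchanged and the class of admissible $T$ can only shrink. The degree-counting argument in the previous paragraph is the technical heart of the proof and is the source of what the authors describe as ``non-trivial technical details.''
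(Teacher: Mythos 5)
Your proposal is correct and follows essentially the same route as the paper: reduce the theorem to the verification of invariants (i1)--(i8), handle (i2)--(i5), (i7), (i8) by direct inspection, and carry the approximation factor through (i6) via an $\eps$-growth-rate comparison that reduces to a degree-counting inequality on the tree obtained by contracting the blocks of $\Lcal^*$. Your contracted tree $T'$ and the handshake identity are exactly the content of the paper's Lemma~\ref{lemma3} (which uses a forest $\Hcal$ on vertex set $\Pcal$ and the count $|E_{\Hcal}|=|\Pcal|-1-|\Pcal[\barra{T}]|$), and your observation that $|\Acal_{\{\tstar\}}|\leq 1$ supplies the same slack the paper uses; one cosmetic slip is that a $B\in\Scal$ meeting $V_T$ with $\delta_T B=\emptyset$ need not \emph{equal} $V_T$ but must \emph{contain} it, which is what contradicts ``not wrapped.''
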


To complete the proof of the theorem 
we must only verify the invariants of the algorithm, 
something we shall do in the next section. 

The example in Figure~\ref{example} shows that the approximation ratio
of the \jmp\ algorithm can be arbitrarily close to~$2$, 
regardless of the size of the graph.
So, Theorem~\ref{n-maint} is tight.

\begin{figure}[tbh]
\begin{center}
\psfrag{a}{{\small $1+\rho$}}
\includegraphics{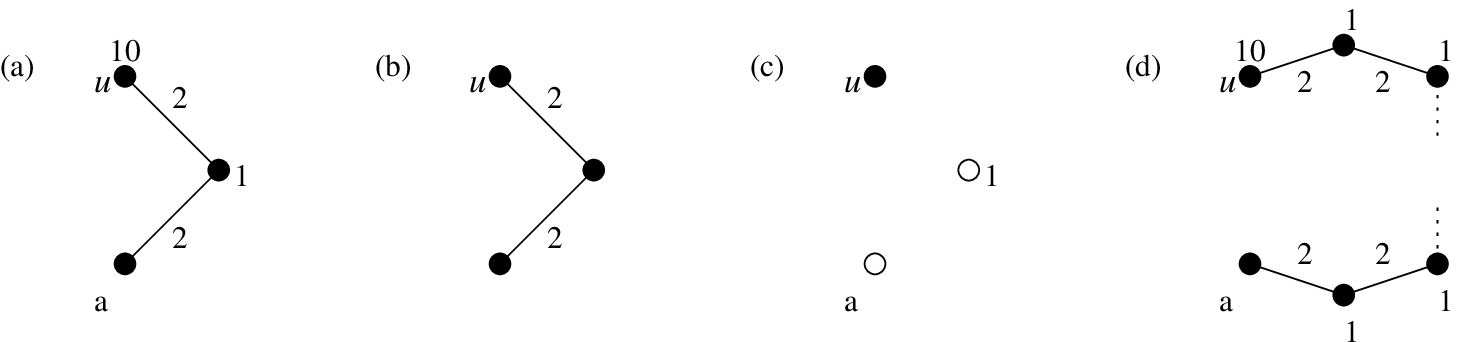} 
\capt{91ex}{\label{example}
(a)~An instance of the~\pcst.
(b)~The solution produced by the \jmp\ algorithm when $\rho>0$. Its cost is 4. 
(c)~The optimal solution, consisting of vertex~$u$ alone, has cost~$2+\rho$.
(d)~A similar instance of arbitrary size consists of a long path.
}
\end{center}
\end{figure}

\section{Proofs of the invariants}

Invariants (i1) to (i4) obviously hold
at the beginning of each iteration of phase~I.
We must only verify the other four invariants.

\color{myred}
\subparagraph{Proof of (i5).}
Obviously (i5) holds at the beginning of the first iteration.
Now consider an iteration where Case~I.1 occurs.
If Subcase~I.1.A occurs, 
then (i5) remains trivially true at the beginning of the next iteration.
Next, suppose Subcase~I.1.B occurs.
Adjust notation so that~$L_1\notin\Scal$.
Since (i5)\mar{(i5)} holds at the beginning of the current iteration,
$L_1$ is not the union of elements of~$\Scal$. 
Hence, $L_1\cup L_2$ is not the union of elements of~$\Scal$.
Therefore,
(i5) remains trivially true at the beginning of the next iteration.~\qed
\color{black} 

The verification of~(i6)
depends on the following lemma:

\begin{lemma}
\label{lemma3}
Let $\Pcal$ be a partition of $V$
and $(\Acal,\Bcal)$ a bipartition of~$\Pcal$.
Let $T$ be a tree in~$G$.
If $T$ is $\Pcal$-connected,
has no bridge in~$\Bcal$,
and is not wrapped in~$\Bcal$,
then
\begin{equation}\label{n-equality-lemma}
 \mytextstyle
\frac{1}{2} \sum^{}_{A \in \Acal}|\delta_T A| 
           + |\Acal[\barra{T}]|
 \ \leq \ |\Acal| - 1\:.
\end{equation}
\end{lemma}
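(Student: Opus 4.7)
\textbf{Proof plan for Lemma~\ref{lemma3}.}
The plan is to pass to the contracted tree $T/\Pcal$ obtained by identifying, inside $T$, all vertices that lie in the same class of $\Pcal$. Since $T$ is $\Pcal$-connected and is a tree, the set of vertices of $T$ that lie in any single class $P\in\Pcal$ induces a connected subgraph of $T$, so the contraction is again a tree. Its vertex set is $\Pcal_T:=\{P\in\Pcal : P\cap V_T\neq\emptyset\}$ and its edge set is exactly the set of edges of $T$ whose two ends lie in different classes of $\Pcal$. Writing $\kappa:=|\Pcal_T|$, this contracted tree has $\kappa-1$ edges, whence by counting edge-ends one gets
\[
\sum_{P\in\Pcal}|\delta_T P| \ = \ 2(\kappa-1).
\]

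Splitting the sum according to the bipartition $(\Acal,\Bcal)$ and using $|\Acal[\barra{T}]|=|\Acal|-|\Acal\cap\Pcal_T|$ and $\kappa=|\Acal\cap\Pcal_T|+|\Bcal\cap\Pcal_T|$, the inequality~(\ref{n-equality-lemma}) that we want to prove is equivalent, after a short rearrangement, to
\[
|\Bcal\cap\Pcal_T| \ \leq \ \tfrac{1}{2}\sum_{B\in\Bcal}|\delta_T B|.
\]
Since a class $B\in\Bcal$ that does not meet $V_T$ contributes $0$ to the right-hand side, it suffices to show that every $B\in\Bcal\cap\Pcal_T$ satisfies $|\delta_T B|\geq 2$.

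So the crux is exactly this per-class lower bound. Fix any $B\in\Bcal$ that meets $V_T$. Because $T$ is not wrapped in $\Bcal$, we have $V_T\not\subseteq B$, so $T$ contains a vertex inside $B$ and a vertex outside $B$; since $T$ is connected, at least one edge of $T$ must cross the boundary of $B$, i.e., $|\delta_T B|\geq 1$. The hypothesis that $T$ has no bridge in $\Bcal$ then upgrades this to $|\delta_T B|\geq 2$, which is what we needed. (This is the only step that uses the no-bridge and not-wrapped hypotheses, and it is also the only place where one must be careful: if $T$ were wrapped in some $B$, then $|\Bcal\cap\Pcal_T|=1$ but $|\delta_T B|=0$, and the inequality would fail.) Substituting back, summing over $B\in\Bcal\cap\Pcal_T$, and plugging into the rearranged form completes the proof.
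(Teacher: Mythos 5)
Your proof is correct and takes essentially the same route as the paper's: both arguments contract $T$ by $\Pcal$ to obtain a tree (the paper phrases this as an auxiliary graph $\Hcal$ on all of $\Pcal$ and notes it is a forest whose components are the contracted tree plus the isolated classes not meeting $V_T$), count edges via the handshake lemma to get~(\ref{eq1:lemma3}), and use the no-bridge/not-wrapped hypotheses to get $|\delta_T B|\geq 2$ for every $B\in\Bcal$ meeting $V_T$, combining the two bounds by rearrangement. The only difference is cosmetic bookkeeping: you restrict attention to $\Pcal_T$ rather than keeping the isolated classes around.
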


\begin{proof}
Let us say that two elements of $\Pcal$
are \textit{adjacent} if there is an edge of $T$ 
with these two elements as extremes.
This adjacency relation defines a graph $\Hcal$
having $\Pcal$ as set of vertices.
Since $T$ is $\Pcal$-connected,
the edges of $\Hcal$ are in one-to-one correspondence
with the edges of $T$ external to~$\Pcal$.
Hence,
the degree of any vertex $P$ of $\Hcal$ is exactly~$|\delta_T P|$,
and therefore
$\frac{1}{2} \sum_{P \in \Pcal}|\delta_T P| = |E_{\Hcal}|$.
Since $T$ is connected,
$\Hcal$ has $1 + |\Pcal[\barra{T}]|$ components
(all are singletons, except at most one).
\begin{hide}
 ((I guess here we need $V_T\neq \emptyset$.))
\end{hide}
Since $T$ has no cycles and is $\Pcal$-connected,
$\Hcal$ is a forest.
Hence
$|E_{\Hcal}| = |\Pcal| - 1 - |\Pcal[\barra{T}]|$
and therefore
\begin{equation}\label{eq1:lemma3}
\mytextstyle
\frac{1}{2} \sum^{}_{P\in\Pcal}|\delta_T P|
 =      |\Pcal|-1-|\Pcal[\barra{T}]|\:.
\end{equation}
Now consider the vertices of $\Hcal$ that are in~$\Bcal$.
Since $T$ has no bridge in~$\Bcal$ and is not wrapped in~$\Bcal$, 
each $B$ in $\Bcal$ is such that either
$|\delta_T B| \geq 2$ or $B\subseteq \barra{V_T}$.
Hence $\sum^{}_{B \in \Bcal} |\delta_T B| 
\geq 2\,|\Bcal \setminus \Bcal[\barra{T}]|$,
and therefore
\begin{equation}\label{eq2:lemma3}
\mytextstyle
\frac{1}{2} \sum^{}_{B \in \Bcal} |\delta_T B| 
\ \geq \ |\Bcal| - |\Bcal[\barra{T}]|\:.
\end{equation}
The difference between~(\ref{eq1:lemma3}) and~(\ref{eq2:lemma3})
is the claimed inequality~(\ref{n-equality-lemma}).~\qed
\end{proof}

\subparagraph{Proof of (i6).}
It is clear that (i6) holds at the beginning of the first iteration.
Now assume that it holds at
the beginning of some iteration where Case~I.1 occurs. 

Suppose, first, that Subcase~I.1.A occurs.
At the end of the subcase,
let $\Scal':=\Scal\cup\{L\}$,
let $\tstar$ be any vertex, 
and let $T$ be an $\Lcal$-connected tree
that has no bridge in~$\Scal'$, 
is not wrapped in~$\Scal'$,
\textcolor{myred}{and such that all its edges are tight for~$\yeps$}.
\textcolor{myred}{Of course all edges of $T$ are tight for~$y$}.
Since $T$ has no bridge in~$\Scal$
and is not wrapped in~$\Scal$, (\ref{n-eq3}) holds.
We must show that (\ref{n-eq3}) also holds when $\yeps$ is substituted for~$y$.
Let $\Pcal:=\Lcal^*$, 
$\Acal:=\Lcal^*\setminus\Scal$, 
and $\Bcal:=\Lcal^*\cap\Scal$.
Since $|\Acal_{\{\tstar\}}| \leq 1$,
Lemma~\ref{lemma3} implies
\[
\mytextstyle
\sum^{}_{A \in \Acal}|\delta_T A|\,\eps 
           + \textcolor{two}{2}\,|\Acal[\barra{T}]|\,\eps
 \ \leq \ 2\,|\Acal\setminus\Acal_{\{\tstar\}}|\,\eps\:.
\]
The addition of this inequality to~(\ref{n-eq3})
produces 
\[
\mytextstyle
\sum^{}_{e\in E_T} \yeps(\Lcal(e)) 
 + \textcolor{two}{2}\,\yeps(\Lcal[\barra{T}])  
\ \leq \ 
2\,\yeps(\Lcal \setminus \Lcal_{\{\tstar\}})\:,
\]
since $\yeps$ differs from $y$ only in~$\Acal$.
Hence, (i6) remains true at the beginning of the next iteration.

Now suppose Subcase~I.1.B occurs.
At the end of the subcase,
let $\Lcal':=\Lcal\cup\{L_1 \cup L_2\}$,
let $\tstar$ be any vertex,
and let $T$ be an $\Lcal'$-connected tree
that has no bridge in~$\Scal$
and is not wrapped in~$\Scal$.
Since $T$ is $\Lcal$-connected, (\ref{n-eq3}) holds.
We must show that (\ref{n-eq3}) remains true when $\yeps$ and $\Lcal'$
are substituted for $y$ and $\Lcal$ respectively.
Let $\Pcal:=\Lcal^*$, 
$\Acal:=\Lcal^*\setminus\Scal$, 
and $\Bcal:=\Lcal^*\cap\Scal$.
Since $|\Acal_{\{\tstar\}}| \leq 1$,
Lemma~\ref{lemma3} implies
$
\sum^{}_{A \in \Acal}|\delta_T A|\,\eps 
           + \textcolor{two}{2}\,|\Acal[\barra{T}]|\,\eps
\leq 
2\,|\Acal\setminus\Acal_{\{\tstar\}}|\,\eps
$,
as in the previous case.
The addition of this inequality to~(\ref{n-eq3})
produces
\[
\mytextstyle
\sum^{}_{e\in E_T} \yeps(\Lcal'(e)) 
 + \textcolor{two}{2}\,\yeps(\Lcal'[\barra{T}])  
\ \leq \ 
2\,\yeps(\Lcal' \setminus \Lcal'_{\{\tstar\}})\:,
\]
since $\yeps_{L_1 \cup L_2} = 0$
and $\yeps$ differs from $y$ only in~$\Acal$.
Hence, (i6) remains true at the beginning of the next iteration.~\qed

\subparagraph{Proof of (i7).}
Suppose we are at the beginning of the first iteration of phase~II.
Let $L$ be an element of $\Lcal$ such that $L\cap V_T\neq \emptyset$.
Since $V_T=M\in\Lcal^*$, 
we have $L\subseteq V_T$
and therefore $T[V_T\cap L] = T[L]=F[L]$.
Since $F[L]$ is connected by virtue of (i1)\mar{(i1)},
so is $T[V_T\cap L]$.
This argument shows that $T$ is $\Lcal$-connected.
In particular, $T$ is $M$-connected and therefore $T$ is a tree. 
Hence, (i7) holds at the beginning of the first iteration. 


Now suppose (i7) holds at the beginning of some iteration
where Case~II.1 occurs.
Let $L$ be an element of~$\Lcal$
and let $u$ and $v$ be vertices in $L\cap (V_T\setminus Z)$.
Let $P$ be the unique path from $u$ to $v$ in~$T$.
We may assume that $P$ never leaves~$L$.
Moreover, 
$P$ never enters~$Z$,
given that $|\delta_T Z|=1$.
Hence, $T-Z$ is $L$-connected.
For the same reason, $T-Z$ is a tree.
Hence (i7) holds at the beginning of the next iteration.~\qed

%

\color{myred}
\subparagraph{Proof of (i8).}
At the beginning of the first iteration of phase~II,
(i8) holds because $V_T=M$.
Now consider an iteration where Case~II.1 occurs.
We may assume that there is a partition $\Ucal$ of $M\setminus V_T$ 
into elements of~$\Scal$.
If $Z\subseteq V_T$ then $\Ucal\cup\{Z\}$ is a partition of 
$M\setminus (V_T\setminus Z)$
into elements of~$\Scal$.
Otherwise,
$Z$ includes some of the elements of $\Ucal$ 
and is disjoint from all the others.
\begin{hide}
 ((Note that $V_T$ need not be an element of $\Lcal$.))
\end{hide}
Hence,
$\{Z\} \cup \{U\in \Ucal: U\cap Z=\emptyset\}$
is a partition of 
$M\setminus (V_T\setminus Z)$
into elements of~$\Scal$.
This shows that (i8) holds at the beginning of the next iteration.~\qed
\color{black}


%

\normalem
\bibliographystyle{plain}


\end{document}